\documentclass[11pt]{amsart}
\usepackage{amsmath,amssymb,amsthm}
\usepackage[margin=1.15in]{geometry}
\usepackage[colorlinks=true,linkcolor=blue,citecolor=blue,urlcolor=blue]{hyperref}
\hypersetup{
  pdftitle={A uniqueness theorem for exact separable log-partition decompositions},
  pdfauthor={Michael P. Rubin},
  pdfsubject={An operational characterization of the Kullback--Leibler divergence},
  pdfkeywords={Kullback--Leibler divergence, canonical divergence, log-partition function, variational free energy, dually flat geometry, characterization theorem}
}

\newtheorem{theorem}{Theorem}[section]
\newtheorem{lemma}[theorem]{Lemma}
\newtheorem{proposition}[theorem]{Proposition}
\newtheorem{corollary}[theorem]{Corollary}
\theoremstyle{definition}
\newtheorem{definition}[theorem]{Definition}
\newtheorem{example}[theorem]{Example}
\theoremstyle{remark}
\newtheorem{remark}[theorem]{Remark}

\newcommand{\X}{\mathcal{X}}
\newcommand{\simp}{\mathcal{P}^{\circ}(\X)}
\newcommand{\E}{\mathbb{E}}
\newcommand{\R}{\mathbb{R}}
\newcommand{\gap}{\widehat{\mathcal{R}}}

\begin{document}

\title[Uniqueness of separable log-partition decompositions]{A uniqueness theorem for exact separable log-partition decompositions}

\author{Michael P. Rubin}
\address{Wyss Institute for Biologically Inspired Engineering, Harvard University, Boston, MA 02115, USA}
\email{Michael.Rubin@Wyss.Harvard.edu}
\urladdr{\url{https://orcid.org/0009-0008-9814-7233}}

\subjclass[2020]{Primary 62B11; Secondary 53B12, 94A17, 39B22, 62F15, 82B03}

\keywords{Kullback--Leibler divergence; canonical divergence; log-partition function; variational free energy; dually flat geometry; characterization theorem}

\begin{abstract}
Let $\X$ be a finite set, let $p$ and $Q$ be strictly positive probability distributions on $\X$, and let $u\in\R^{\X}$ be a log-weight. Define $\Psi_{p}(u)=\log\sum_{x}p(x)e^{u(x)}$ and $\pi_{p,u}(x)=p(x)e^{u(x)-\Psi_{p}(u)}$. The variational free energy $F(Q;p,u)=D(Q\,\|\,p)-\E_{Q}[u]$ satisfies the exact identity $\Psi_{p}(u)=-F(Q;p,u)+D(Q\,\|\,\pi_{p,u})$, whose residual is the canonical divergence of the dually flat simplex. We prove a converse. Suppose $G(Q;p,u)=A(Q,p)+B(Q,u)$ is additively separable and its gap $\Psi_{p}(u)+G(Q;p,u)$ is nonnegative and vanishes whenever $Q=\pi_{p,u}$. Then $G=F$ and the gap equals $D(Q\,\|\,\pi_{p,u})$; $A$ and $B$ are unique up to a $Q$-dependent additive gauge. The gap is not assumed to be a divergence or to depend only on the target, and no continuity, measurability, differentiability, convexity, or membership in a prescribed divergence family is assumed. The proof bounds increments of a corrected accuracy term by centered log-moment-generating costs of order $k^{-2}$ per step. Telescoping $k$ equal exponential tilts gives an $O(k^{-1})$ bound and forces every increment to vanish. Thus target factorization and strictness are conclusions. Counterexamples show that separability, nonnegativity, and tightness are each necessary. Within this class, no $\alpha$-divergence distinct from Kullback--Leibler and no R\'enyi divergence of order other than one can occur as the residual.
\end{abstract}

\maketitle

\section{Introduction}\label{sec:intro}

Let $\X$ be a finite set and let $\simp$ denote the open simplex of strictly positive probability distributions on $\X$, carrying its standard dually flat information-geometric structure \cite{AmariNagaoka,Amari2016,AyJost}. Given a reference distribution $p\in\simp$ and a log-weight $u\in\R^{\X}$, exponential tilting produces
\[
\pi_{p,u}(x)=p(x)\,e^{u(x)-\Psi_{p}(u)},\qquad
\Psi_{p}(u)=\log\sum_{x\in\X}p(x)e^{u(x)},
\]
where $\Psi_{p}$ is the log-partition function of the tilt. For a trial distribution $Q\in\simp$, define the variational free energy
\begin{equation}\label{eq:F}
F(Q;p,u)=D(Q\,\|\,p)-\E_{Q}[u],
\end{equation}
where $D(Q\,\|\,R)=\sum_{x}Q(x)\log\frac{Q(x)}{R(x)}$ and $\E_{Q}[f]=\sum_{x}Q(x)f(x)$. A one-line computation, recalled as Proposition~\ref{prop:identity}, gives
\begin{equation}\label{eq:identity}
\Psi_{p}(u)=-F(Q;p,u)+D(Q\,\|\,\pi_{p,u})\qquad\text{for every }Q\in\simp.
\end{equation}
The identity expresses the log-partition function as a separated complexity and accuracy functional plus a residual, and the residual is the canonical divergence of the dually flat structure on $\simp$ \cite{AmariNagaoka,Amari2016,AyJost,AyAmari2015}.

Identity~\eqref{eq:identity} occurs in two standard settings. With $u=-\beta H$ and $p$ uniform, $F$ is the mean-field free energy $\beta\,\E_{Q}[H]-S(Q)$ up to an additive constant, and \eqref{eq:identity} together with $D(Q\,\|\,\pi_{p,u})\ge 0$ is the Gibbs--Bogoliubov inequality \cite{OpperSaad}. With $p$ a prior and $u(x)=\log L(D\mid x)$ a log-likelihood, $\Psi_{p}(u)$ is the log-evidence, $\pi_{p,u}$ the posterior, $-F$ the evidence lower bound, and \eqref{eq:identity} is the identity underlying variational Bayesian inference \cite{Jordan,WainwrightJordan,Blei}.

In applications it is common to replace the Kullback--Leibler term in a variational objective by an $\alpha$-divergence or a R\'enyi divergence, altering mass-covering and mode-seeking behaviour \cite{Minka,LiTurner}. Several such objectives furnish useful bounds on the log-partition function, but their standard formulations do not yield an exact identity of the form \eqref{eq:identity}. This raises a question that is prior to the choice among such objectives and structural in character: which functionals admit an exact decomposition of $\Psi_{p}(u)$ at all, and how much freedom is there in the choice? Since \eqref{eq:identity} is invariably obtained by defining $F$ as in \eqref{eq:F} and computing, the question does not arise in the course of the usual derivations.

The question requires care, because without structural conditions nothing is constrained: given any nonnegative function $\Phi(Q;p,u)$ that vanishes when $Q=\pi_{p,u}$, defining $G(Q;p,u)=\Phi(Q;p,u)-\Psi_{p}(u)$ makes $\Phi$ its gap (Example~\ref{ex:vacuous}). Any characterization must therefore rest on conditions governing how the computable part may depend on the model. We impose one such condition, additive separability: the computable part splits into a term depending on the reference distribution and a term depending on the log-weight, neither consulting the other. In mean-field language this asks that the trial free energy be an entropic contribution plus an energetic one; in inferential language, that the objective consist of a complexity term and an accuracy term. Beyond separability we require only that the functional majorize $-\Psi_{p}(u)$ and that the bound be exact whenever the trial distribution equals the target. The gap is not assumed to be a divergence, nor to depend on the model only through its target, and no continuity, measurability, differentiability, convexity, or membership in a prescribed divergence family is assumed. The main theorem, stated formally as Theorem~\ref{thm:main} in Section~\ref{sec:hypotheses}, asserts that these hypotheses already force the functional to be the variational free energy \eqref{eq:F} and the gap to be $D(Q\,\|\,\pi_{p,u})$; in particular, the target factorization of the residual and its strictness off the target are conclusions rather than assumptions. The split of $G$ into $A$ and $B$ is determined up to a $Q$-dependent additive gauge.

A characterization of this kind locates the freedom in a variational method. Mean-field schemes and variational inference alike proceed by fixing the functional and optimizing it over a tractable trial class, and the art of both subjects lies in the choice of that class. Theorem~\ref{thm:main} says that under its hypotheses there is no corresponding latitude at the level of the functional itself: the trial class is a modelling choice, the functional is not. The theorem also indicates where alternatives must live. An exact decomposition differing from \eqref{eq:identity} (in a quantum setting, say, or under a different reading of which part is to be computable) must relinquish one of the hypotheses, and Section~\ref{sec:sharpness} exhibits what each relinquishment permits.

\subsection*{Relation to prior work}
Relative entropy admits several characterizations, and it is useful to locate the present result among them. Shore and Johnson \cite{ShoreJohnson} axiomatize the inference procedure (which posterior should follow from a prior and a set of constraints) and conclude that one should minimize relative entropy; Csisz\'ar \cite{Csiszar1991} gives a related axiomatic treatment for linear inverse problems, and Uffink \cite{Uffink} later examined the uniqueness claim in such arguments. Csisz\'ar \cite{Csiszar1967} and Ali and Silvey \cite{AliSilvey} develop the class of $f$-divergences, Bregman \cite{Bregman} introduces the class bearing his name, and Amari \cite{Amari2009} shows that the Kullback--Leibler divergence lies in both classes and is alone in doing so among divergences on probability distributions (see also \cite{AmariNagaoka}). Within information geometry the Kullback--Leibler divergence arises in a further way: it is the canonical divergence of the dually flat structure on the simplex \cite{AmariNagaoka,Amari2016,AyJost}, and Ay and Amari \cite{AyAmari2015} construct a canonical divergence from geodesic data that recovers it in the dually flat case; Nielsen \cite{Nielsen2020} surveys this circle of ideas. A neighboring literature characterizes proper local scoring rules on discrete sample spaces \cite{DawidLauritzenParry}; there the primitive object is an outcome-dependent scoring rule, whereas here no scoring rule is assumed and the primitive object is an exact decomposition of a log-normalizer under separation of its reference-distribution and log-weight arguments. Recent work has also examined the information geometry of evidence-lower-bound optimization, including the relation between the Fisher--Rao natural gradient of the evidence lower bound and Kullback--Leibler minimization on constrained models \cite{AyVanOostrumDatar}; the present question is logically prior to model restriction and optimization, since it characterizes which exact decomposition underlies the objective in the first place.

Theorem~\ref{thm:main} is of a different type from the divergence characterizations above. It constrains what a decomposition may be, and the divergence is an output rather than an input: the gap is not postulated to be a divergence, no geometric structure is assumed, and none of the cited characterizations is invoked. A 2026 preprint by Takeo Imaizumi, ``An inverse variational characterization of the Kullback--Leibler divergence'' (Jxiv 2883, \href{https://doi.org/10.51094/jxiv.2883}{doi:10.51094/jxiv.2883}), studies a distinct one-dimensional problem under $C^{2}$, homogeneity, normalization, and strict-convexity assumptions; it does not address separable log-partition decompositions. To the best of our knowledge, no previous result characterizes the decompositions considered here. The theorem is silent about which objective to optimize when an exact decomposition is not required, which is the setting in which the alternative divergences of \cite{Minka,LiTurner} are proposed and are useful; its scope is the class of exact separable decompositions. The relation to the canonical divergence of dually flat geometry is taken up in Remark~\ref{rem:canonical}.

\subsection*{Organization}
Section~\ref{sec:setting} fixes notation and records the geometric meaning of $\Psi_{p}$. Section~\ref{sec:decomposition} recalls \eqref{eq:identity}. Section~\ref{sec:hypotheses} states the hypotheses and the theorem, Section~\ref{sec:proof} proves it, and Section~\ref{sec:sharpness} shows by counterexample that each hypothesis is needed. Section~\ref{sec:discussion} discusses the relation to convex duality and to the canonical divergence, and the extension beyond finite $\X$.

\section{Setting}\label{sec:setting}

Throughout, $\X$ is a finite set with $|\X|=n\ge 2$, and
\[
\simp=\Bigl\{Q\in\R^{\X}:Q(x)>0\ \forall x,\ \textstyle\sum_{x}Q(x)=1\Bigr\}.
\]
We call $p\in\simp$ a reference distribution, $u\in\R^{\X}$ a log-weight, and the pair $(p,u)\in\simp\times\R^{\X}$ a model. For a model we write
\[
\Psi_{p}(u)=\log\sum_{x}p(x)e^{u(x)},\qquad
\pi_{p,u}=p\,e^{u-\Psi_{p}(u)}\in\simp,
\]
and call $\pi_{p,u}$ the target of the model. For $Q,R\in\simp$ set $D(Q\,\|\,R)=\sum_{x}Q(x)\log\frac{Q(x)}{R(x)}$; recall $D(Q\,\|\,R)\ge 0$ with equality iff $Q=R$ \cite{CoverThomas}. For $f,m\in\R^{\X}$ we write $\langle m,f\rangle=\sum_{x}m(x)f(x)$, and $\mathbf{1}$ for the all-ones vector. The variational free energy is defined by \eqref{eq:F}.

Two elementary observations are used repeatedly. First, the models $(p,u)$ and $(p,u+c\mathbf{1})$ have the same target for every $c\in\R$, while $\Psi_{p}(u+c\mathbf{1})=\Psi_{p}(u)+c$: constants in the log-weight are pure normalization. Second, $(p,u)$ and $(p',u')$ have the same target if and only if $\log p+u$ and $\log p'+u'$ differ by a constant vector.

\begin{remark}[Geometric meaning of $\Psi_{p}$]\label{rem:geometry}
For fixed $p$, the function $\Psi_{p}$ is smooth and convex on $\R^{\X}$, with gradient and Hessian
\[
\nabla\Psi_{p}(u)=\pi_{p,u},\qquad
\nabla^{2}\Psi_{p}(u)=\operatorname{diag}(\pi_{p,u})-\pi_{p,u}\pi_{p,u}^{\top},
\]
the latter being the covariance form under $\pi_{p,u}$, positive semidefinite with kernel the constant vectors. After quotienting by constants, the log-weights are exponential (natural) coordinates for $\simp$, the targets are the dual expectation coordinates, and the Hessian induces the Fisher--Rao metric. The Bregman divergence generated by $\Psi_{p}$ is the Kullback--Leibler divergence between tilted targets, with the orientation
\[
B_{\Psi_{p}}(u,v)=\Psi_{p}(u)-\Psi_{p}(v)-\langle\nabla\Psi_{p}(v),\,u-v\rangle
=D(\pi_{p,v}\,\|\,\pi_{p,u}),
\]
the canonical divergence of the dually flat structure \cite{AmariNagaoka,Amari2016,AyJost}. Theorem~\ref{thm:main} characterizes the same divergence without invoking any of this structure: no metric, no connection, and no convexity appear among its hypotheses.
\end{remark}

\section{The decomposition identity}\label{sec:decomposition}

\begin{proposition}\label{prop:identity}
For every model $(p,u)$ and every $Q\in\simp$,
\[
\Psi_{p}(u)=-F(Q;p,u)+D(Q\,\|\,\pi_{p,u}).
\]
\end{proposition}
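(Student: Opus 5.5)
The identity is a direct computation, and I will carry it out by expanding the residual $D(Q\,\|\,\pi_{p,u})$ using the explicit form of the target. Since $Q$, $p$ and $\pi_{p,u}$ all lie in $\simp$, every logarithm below is finite, and since $\X$ is finite, all sums are finite. No limiting or integrability issues arise.

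First, I take logarithms in the definition $\pi_{p,u}(x)=p(x)e^{u(x)-\Psi_{p}(u)}$, which gives pointwise
\[
\log\frac{Q(x)}{\pi_{p,u}(x)}=\log\frac{Q(x)}{p(x)}-u(x)+\Psi_{p}(u).
\]
Next, I multiply by $Q(x)$ and sum over $x\in\X$. The first term sums to $D(Q\,\|\,p)$ and the second to $-\E_{Q}[u]$. The third term sums to $\Psi_{p}(u)\sum_{x}Q(x)=\Psi_{p}(u)$, because $Q$ is a probability distribution and $\Psi_{p}(u)$ does not depend on $x$. This yields
\[
D(Q\,\|\,\pi_{p,u})=D(Q\,\|\,p)-\E_{Q}[u]+\Psi_{p}(u)=F(Q;p,u)+\Psi_{p}(u),
\]
using the definition \eqref{eq:F}. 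Rearranging gives the claimed identity.

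There is no genuine obstacle. The only point needing care is the normalization step: the constant $\Psi_{p}(u)$ comes out of the sum with coefficient exactly one because $\sum_x Q(x)=1$. This is also why the identity holds for every $Q\in\simp$ and not only at $Q=\pi_{p,u}$. As an immediate consequence, recorded for later use, $D(Q\,\|\,\pi_{p,u})\ge0$ with equality iff $Q=\pi_{p,u}$ \cite{CoverThomas}. Hence $-F(Q;p,u)\le\Psi_{p}(u)$, with equality exactly at the target.
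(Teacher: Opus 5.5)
Your proof is correct and takes essentially the same approach as the paper: both take logarithms in $\pi_{p,u}=p\,e^{u-\Psi_{p}(u)}$, integrate against $Q$, and use $\sum_{x}Q(x)=1$ to extract $\Psi_{p}(u)$ with coefficient one. The only difference is bookkeeping: you expand $D(Q\,\|\,\pi_{p,u})$ directly, whereas the paper sums $-F$ and $D(Q\,\|\,\pi_{p,u})$ and lets the terms cancel.
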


\begin{proof}
Write $\pi=\pi_{p,u}$ and $\Psi=\Psi_{p}(u)$. Since $\log\pi=\log p+u-\Psi\mathbf{1}$,
\[
-F(Q;p,u)+D(Q\,\|\,\pi)
=\langle Q,\log p-\log Q\rangle+\langle Q,u\rangle+\langle Q,\log Q-\log\pi\rangle
=\langle Q,\Psi\mathbf{1}\rangle=\Psi,
\]
using $\langle Q,\mathbf{1}\rangle=1$.
\end{proof}

Since $D(Q\,\|\,\pi)\ge 0$ with equality iff $Q=\pi$, Proposition~\ref{prop:identity} yields at once the variational principle $F(Q;p,u)\ge -\Psi_{p}(u)$, with equality exactly at $Q=\pi_{p,u}$; this is the Gibbs--Bogoliubov inequality in the present notation.

\section{Hypotheses and statement}\label{sec:hypotheses}

\begin{definition}[Gap]\label{def:gap}
For a function $G:\simp\times\simp\times\R^{\X}\to\R$, define its gap
\[
\gap(Q;p,u):=\Psi_{p}(u)+G(Q;p,u).
\]
\end{definition}

By construction, $\Psi_{p}(u)=-G(Q;p,u)+\gap(Q;p,u)$ identically: every $G$ decomposes the log-partition function exactly, with its own gap as residual. The content of a characterization therefore lies entirely in structural conditions. We impose three.

\begin{itemize}
\item[(S)] \textbf{Separability.} There exist $A:\simp\times\simp\to\R$ and $B:\simp\times\R^{\X}\to\R$ with $G(Q;p,u)=A(Q,p)+B(Q,u)$ for all $Q,p,u$.
\item[(N)] \textbf{Nonnegativity.} $\gap(Q;p,u)\ge 0$ for all $Q,p,u$; equivalently, $G$ majorizes the negative log-partition function, $G(Q;p,u)\ge-\Psi_{p}(u)$.
\item[(T)] \textbf{Tightness at the target.} $\gap(Q;p,u)=0$ whenever $Q=\pi_{p,u}$; the bound in (N) is exact when the trial distribution equals the target.
\end{itemize}

Hypothesis (S) is the substantive one. It asks that the computable part be additive in the reference distribution and the log-weight, each term depending on its own argument alone. It does not name the terms; it restricts them from consulting one another. Hypotheses (N) and (T) say only that $G$ is an upper bound on $-\Psi_{p}(u)$ that is attained at the exponential target. Nothing else is assumed of the gap: it is not assumed to be a divergence, it is not assumed to vanish only at the target, and it is not assumed to depend on the model through the pair $(Q,\pi_{p,u})$ alone.

\begin{theorem}\label{thm:main}
Let $G$ satisfy (S), (N) and (T). Then
\[
G(Q;p,u)=F(Q;p,u)=D(Q\,\|\,p)-\E_{Q}[u],\qquad
\gap(Q;p,u)=D(Q\,\|\,\pi_{p,u})
\]
for all $Q,p\in\simp$ and $u\in\R^{\X}$. Moreover, every pair $A,B$ witnessing (S) is determined up to an additive gauge: necessarily
\[
A(Q,p)=D(Q\,\|\,p)+a(Q),\qquad B(Q,u)=-\E_{Q}[u]-a(Q)
\]
for some $a:\simp\to\R$, which cancels in $G$.
\end{theorem}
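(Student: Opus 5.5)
The plan is to subtract off the expected answer and show that what remains is constant in every argument except $Q$. Using Proposition~\ref{prop:identity}, I would write $A(Q,p)=D(Q\,\|\,p)+\alpha(Q,p)$ and $B(Q,u)=-\E_Q[u]+\beta(Q,u)$. These are definitions of $\alpha,\beta$, so no generality is lost. Then
\[
\gap(Q;p,u)=D(Q\,\|\,\pi_{p,u})+\alpha(Q,p)+\beta(Q,u),
\]
and the theorem reduces to showing that $\alpha(Q,p)+\beta(Q,u)$ vanishes identically with $\alpha$ independent of $p$. From then on fix $Q\in\simp$; all constructions are pointwise in $Q$, which is why the gauge $a(Q)$ appears.

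\emph{Step 1 (tightness collapses $\alpha,\beta$ to one function).} By the second elementary observation of Section~\ref{sec:setting}, $\pi_{p,u}=Q$ iff $\log p+u-\log Q$ is a constant vector. Given any $u$, the reference $p_u:=Qe^{-u}/\langle Q,e^{-u}\rangle$ satisfies this, so (T) gives $\beta(Q,u)=-\alpha(Q,p_u)$. Conversely, every $p$ equals $p_u$ for $u=\log Q-\log p$ (up to constants). Define $\varphi(v):=\alpha(Q,p_v)$ for $v\in\R^{\X}$. It is invariant under $v\mapsto v+c\mathbf{1}$, and
\[
\alpha(Q,p)=\varphi(\log Q-\log p),\qquad \beta(Q,u)=-\varphi(u).
\]
Thus, with $w:=\log Q-\log p$ (which ranges over all of $\R^{\X}$ modulo constants as $p$ varies), the gap becomes $D(Q\,\|\,\pi_{p,u})+\varphi(w)-\varphi(u)$.

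\emph{Step 2 (nonnegativity becomes a quadratic increment bound).} Since $\pi_{p,u}\propto pe^{u}=Qe^{u-w}$, a direct computation gives
\[
D(Q\,\|\,\pi_{p,u})=\Lambda_Q(u-w),\qquad \Lambda_Q(h):=\log\E_Q[e^{h}]-\E_Q[h],
\]
the centered log-moment-generating function. Hypothesis (N) then reads
\[
\varphi(w+h)-\varphi(w)\le\Lambda_Q(h)\qquad\text{for all }w,h\in\R^{\X}.
\]
Because $\X$ is finite, $\Lambda_Q(h)\le C_Q\|h\|_\infty^2$ for $\|h\|_\infty\le1$. One can see this from $\Lambda_Q(0)=0$, $\nabla\Lambda_Q(0)=0$ and Taylor's theorem; alternatively, Hoeffding's lemma gives $C_Q=1/2$.

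\emph{Step 3 (telescoping, the key step).} Write $\varphi(w+h)-\varphi(w)$ as the sum of $k$ increments $\varphi(w+\tfrac{j}{k}h)-\varphi(w+\tfrac{j-1}{k}h)$. By Step 2 each increment is at most $\Lambda_Q(h/k)=O(k^{-2})$, so the total is $O(k^{-1})$. Letting $k\to\infty$ gives $\varphi(w+h)\le\varphi(w)$. Applying the same argument with $-h$ gives the reverse inequality, so $\varphi$ is constant; call its value $a(Q)$. No regularity of $\varphi$ is needed, only the pointwise inequality, since we never take a limit of $\varphi$ itself.

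\emph{Step 4 (conclusion).} With $\varphi\equiv a(Q)$ we get $A=D(Q\,\|\,p)+a(Q)$ and $B=-\E_Q[u]-a(Q)$. Hence $G=F$, and the gap equals $D(Q\,\|\,\pi_{p,u})$ by Proposition~\ref{prop:identity}.

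The main obstacle I expect is Step 1: recognising that tightness, combined with separability, ties $\alpha$ and $\beta$ to a single function on log-weights modulo constants, and that nonnegativity then compares $\varphi$ at two \emph{independent} points $w$ and $u$. That independence is what lets the increment bound be applied along arbitrary straight segments. Once this reduction is in place, the $O(k^{-2})$ per-step bound and the telescoping are routine.
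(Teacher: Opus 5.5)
Your proposal is correct and follows the paper's proof essentially step for step. Your $\varphi$ is the paper's $-B'(Q,\cdot)$. Tightness along the tilts $p_u$ collapses both correction terms to $\varphi$, nonnegativity becomes the centered log-moment-generating increment bound, and telescoping over $k$ equal steps forces $\varphi$ to be constant. The only cosmetic differences are these: you apply the one-sided bound twice (with $h$ and $-h$) where the paper states a two-sided bound, and you use Hoeffding's lemma where the paper uses $\log y\le y-1$ with a Taylor remainder.
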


The theorem is proved in Section~\ref{sec:proof}. Three features of the conclusion deserve notice. First, the gap depends on the model only through the pair $(Q,\pi_{p,u})$: the target factorization of the residual, which one might have thought must be postulated, is derived. Second, the gap vanishes only at $Q=\pi_{p,u}$: strictness is likewise derived. Third, no regularity hypothesis appears anywhere. Nonlinear additive pathologies of Cauchy type, which ordinarily force a measurability, continuity, or boundedness hypothesis in characterization results of this kind \cite{Aczel}, never arise: the telescoping argument of the proof bounds increments directly.

The formulation in which the residual is postulated to be a function of $(Q,\pi)$ alone is the special case in which $\gap(Q;p,u)=\mathcal{R}(Q,\pi_{p,u})$ for some $\mathcal{R}:\simp\times\simp\to\R$; since every $\pi\in\simp$ is realized as a target (take $p=\pi$, $u=0$), Theorem~\ref{thm:main} then gives $\mathcal{R}=D(\cdot\,\|\,\cdot)$ on all of $\simp\times\simp$.

\begin{corollary}\label{cor:exclusion}
Let $\mathcal{R}_{0}:\simp\times\simp\to\R$ be any function with $\mathcal{R}_{0}\ne D(\cdot\,\|\,\cdot)$; for instance an $\alpha$-divergence distinct from $D(\cdot\,\|\,\cdot)$, or a R\'enyi divergence of order $r\ne 1$. Then there is no $G$ satisfying (S) whose gap is $\mathcal{R}_{0}(Q,\pi_{p,u})$ and which satisfies (N) and (T). Equivalently: within the class of additively separable exact decompositions of the log-partition function with nonnegative residual vanishing at the target, no residual other than the Kullback--Leibler divergence can occur.
\end{corollary}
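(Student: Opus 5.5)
The corollary should follow directly from Theorem~\ref{thm:main}, by the target-realization observation made just after it. I would argue by contradiction. Suppose some $G$ satisfies (S), (N) and (T), and its gap has the form $\gap(Q;p,u)=\mathcal{R}_{0}(Q,\pi_{p,u})$ for all $Q,p\in\simp$ and $u\in\R^{\X}$. Theorem~\ref{thm:main} applies to this $G$ and gives $\gap(Q;p,u)=D(Q\,\|\,\pi_{p,u})$ identically. Hence
\[
\mathcal{R}_{0}(Q,\pi_{p,u})=D(Q\,\|\,\pi_{p,u})\qquad\text{for all }Q,p,u.
\]

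The next step is to turn this identity on model space into an identity on $\simp\times\simp$. Fix any $Q,\pi\in\simp$ and choose the model $p=\pi$, $u=0$. Then $\Psi_{p}(0)=0$ and $\pi_{p,0}=p=\pi$, so the display gives $\mathcal{R}_{0}(Q,\pi)=D(Q\,\|\,\pi)$. Since $(Q,\pi)$ was arbitrary, $\mathcal{R}_{0}=D(\cdot\,\|\,\cdot)$ on all of $\simp\times\simp$, contradicting the hypothesis. The ``equivalently'' formulation is a restatement of the same argument. Any residual occurring in the stated class is a function of $(Q,\pi_{p,u})$, and by the argument above it coincides with $D$ wherever defined.

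The only real work is checking the illustrative instances, namely that the named divergences genuinely differ from $D$ as functions on $\simp\times\simp$. For Rényi divergences of order $r\ne1$, I would take $|\X|=2$ with $Q=(q,1-q)$ and $\pi=(\tfrac12,\tfrac12)$. I would then compare $D_{r}(Q\,\|\,\pi)=\frac{1}{r-1}\log\bigl(2^{r-1}(q^{r}+(1-q)^{r})\bigr)$ with $D(Q\,\|\,\pi)$ via a fourth-order expansion in $q-\tfrac12$, or at a single explicit point such as $q\to0$ with $r$ large or small. In either case the two functions are visibly different. For the $\alpha$-divergences, the members with $\alpha\ne\pm1$ in Amari's convention are not of KL form, and the same two-point computation shows this. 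The reverse-KL endpoint $D(\pi\,\|\,Q)$ differs from $D(Q\,\|\,\pi)$ for an asymmetric pair, e.g.\ $Q=(\tfrac12,\tfrac12)$, $\pi=(\tfrac14,\tfrac34)$.

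I expect this case verification to be the only mildly fiddly point, since it depends on the chosen parametrization of the $\alpha$-family. Logically, however, it is not needed: the corollary is stated for arbitrary $\mathcal{R}_{0}\ne D$, and the examples are merely illustrations.
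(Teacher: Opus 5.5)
Your argument is exactly the paper's: Theorem~\ref{thm:main} gives $\gap(Q;p,u)=D(Q\,\|\,\pi_{p,u})$, and realizing an arbitrary $\pi\in\simp$ as the target of the model $(\pi,0)$ forces $\mathcal{R}_{0}=D(\cdot\,\|\,\cdot)$, a contradiction. Your checks of the named examples are, as you say, logically unnecessary, but one of them is wrong as stated: with $\pi$ uniform and $q\to0$, every R\'enyi divergence of order $r>0$ tends to $\log 2$, just as $D(Q\,\|\,\pi)$ does, so compare instead the second-order terms $2r\delta^{2}$ and $2\delta^{2}$ in $\delta=q-\tfrac12$; similarly, at uniform $\pi$ the $\alpha$-divergence with $\alpha=-7$ (in the convention where $\alpha=-1$ gives $D(Q\,\|\,\pi)$) agrees with $D$ through fourth order, so an asymmetric $\pi$ is the safer test there.
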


\begin{proof}
Immediate from Theorem~\ref{thm:main}: any such $G$ has gap $D(Q\,\|\,\pi_{p,u})$, and since every $\pi\in\simp$ is a target, $\mathcal{R}_{0}=D(\cdot\,\|\,\cdot)$, a contradiction.
\end{proof}

Corollary~\ref{cor:exclusion} bears on a practice rather than only on a formalism. Several $\alpha$- and R\'enyi-based objectives furnish useful bounds on the log-partition function \cite{Minka,LiTurner}; the corollary says that, within the class of separable exact decompositions with nonnegative residual vanishing at the target, such a bound cannot also have an exact residual of the prescribed divergent form without relinquishing at least one hypothesis of the theorem. Thus a useful variational bound and the exact identity characterized here are distinct structural properties.

\section{Proof of Theorem~\ref{thm:main}}\label{sec:proof}

\begin{proof}[Proof of Theorem~{\upshape\ref{thm:main}}]
Fix $G$ satisfying (S), (N) and (T), with summands $A,B$ as in (S).

\smallskip
\emph{Step 1: reduction.} Define
\[
A'(Q,p):=A(Q,p)-D(Q\,\|\,p),\qquad B'(Q,u):=B(Q,u)+\E_{Q}[u].
\]
By Proposition~\ref{prop:identity},
\begin{equation}\label{eq:residual}
\gap(Q;p,u)=\Psi_{p}(u)+G(Q;p,u)=D(Q\,\|\,\pi_{p,u})+A'(Q,p)+B'(Q,u).
\end{equation}
We show that for each $Q$, $A'(Q,\cdot)$ and $B'(Q,\cdot)$ are constant and sum to zero, which gives the theorem. Fix $Q\in\simp$ for the remainder of the proof.

\smallskip
\emph{Step 2: tightness along tilts of $Q$.} For $u\in\R^{\X}$, define
\[
p_{u}:=\frac{Q\,e^{-u}}{\E_{Q}[e^{-u}]}\in\simp.
\]
Then $p_{u}e^{u}\propto Q$, so $\pi_{p_{u},u}=Q$, and since $D(Q\,\|\,Q)=0$, hypothesis (T) applied to the model $(p_{u},u)$ in \eqref{eq:residual} gives
\begin{equation}\label{eq:tight}
A'(Q,p_{u})+B'(Q,u)=0\qquad\text{for every }u\in\R^{\X}.
\end{equation}
Two consequences follow. First, since $p_{u+c\mathbf{1}}=p_{u}$ for every $c\in\R$, comparing \eqref{eq:tight} at $u$ and at $u+c\mathbf{1}$ yields
\begin{equation}\label{eq:const}
B'(Q,u+c\mathbf{1})=B'(Q,u)\qquad\text{for all }c\in\R,\ u\in\R^{\X}:
\end{equation}
constants in the log-weight are invisible to $B'$. Second, the map $u\mapsto p_{u}$ is onto $\simp$: given $p\in\simp$, the log-weight $a_{p}:=\log Q-\log p$ satisfies $p\,e^{a_{p}}=Q$, hence $\E_{Q}[e^{-a_{p}}]=\sum_{x}p(x)=1$ and $p_{a_{p}}=p$. Therefore \eqref{eq:tight} determines $A'$ from $B'$:
\begin{equation}\label{eq:Adet}
A'(Q,p)=-B'(Q,a_{p})=-B'\bigl(Q,\log Q-\log p\bigr)\qquad\text{for every }p\in\simp.
\end{equation}

\smallskip
\emph{Step 3: nonnegativity gives a two-sided increment bound.} For arbitrary $p\in\simp$ and $u\in\R^{\X}$, write $a=a_{p}$; then $p\,e^{u}=Q\,e^{u-a}$, so
\[
\pi_{p,u}=\frac{Q\,e^{u-a}}{\E_{Q}[e^{u-a}]},\qquad
D(Q\,\|\,\pi_{p,u})=\log\E_{Q}[e^{u-a}]-\E_{Q}[u-a]=:K_{Q}(u-a),
\]
the centered log-moment-generating function of $u-a$ under $Q$. Substituting \eqref{eq:Adet} into \eqref{eq:residual}, hypothesis (N) reads
\[
0\ \le\ \gap(Q;p,u)=K_{Q}(u-a)+B'(Q,u)-B'(Q,a).
\]
As $p$ ranges over $\simp$, the vector $a=a_{p}$ ranges over
$\{a\in\R^{\X}:\E_{Q}[e^{-a}]=1\}$. For an arbitrary $a\in\R^{\X}$, set
\[
c_{a}:=\log\E_{Q}[e^{-a}],\qquad \widetilde a:=a+c_{a}\mathbf{1}.
\]
Then $\E_{Q}[e^{-\widetilde a}]=1$ and $\widetilde a=a_{p}$ for
$p=Qe^{-a}/\E_{Q}[e^{-a}]\in\simp$. Moreover,
$B'(Q,\widetilde a)=B'(Q,a)$ by \eqref{eq:const}, and
$K_{Q}(u-\widetilde a)=K_{Q}(u-a)$ because
$K_{Q}(w+c\mathbf{1})=K_{Q}(w)$. Hence
\[
B'(Q,u)-B'(Q,a)\ \ge\ -K_{Q}(u-a)\qquad\text{for all }a,u\in\R^{\X}.
\]
Interchanging the roles of $a$ and $u$ gives the matching upper bound, so
\begin{equation}\label{eq:increment}
-K_{Q}(u-a)\ \le\ B'(Q,u)-B'(Q,a)\ \le\ K_{Q}(a-u)\qquad\text{for all }a,u\in\R^{\X}.
\end{equation}

\smallskip
\emph{Step 4: telescoping kills the increments.} The function $K_{Q}$ satisfies, for every $w\in\R^{\X}$,
\begin{equation}\label{eq:Kbound}
0\ \le\ K_{Q}(w)\ \le\ \E_{Q}[e^{w}]-1-\E_{Q}[w]\ \le\ \tfrac{1}{2}\,\|w\|_{\infty}^{2}\,e^{\|w\|_{\infty}}:
\end{equation}
the first inequality is Jensen's, the second is $\log y\le y-1$, and the third is Taylor's theorem with Lagrange remainder, $e^{s}\le 1+s+\tfrac{s^{2}}{2}e^{|s|}$. Now fix $a,v\in\R^{\X}$ and an integer $k\ge 1$, and set $a_{j}:=a+\tfrac{j}{k}v$ for $j=0,\dots,k$. Applying \eqref{eq:increment} to each consecutive pair $(a_{j-1},a_{j})$, whose difference is $v/k$, and summing the telescoping increments,
\[
-k\,K_{Q}(v/k)\ \le\ B'(Q,a+v)-B'(Q,a)\ \le\ k\,K_{Q}(-v/k).
\]
By \eqref{eq:Kbound}, with
\[
C_{k}:=\frac{\|v\|_{\infty}^{2}}{2k}\,e^{\|v\|_{\infty}/k},
\]
we have
\[
-C_{k}\ \le\ B'(Q,a+v)-B'(Q,a)\ \le\ C_{k},
\qquad
0\le C_{k}\le \frac{\|v\|_{\infty}^{2}}{2k}e^{\|v\|_{\infty}}\longrightarrow 0.
\]
Therefore $B'(Q,a+v)=B'(Q,a)$ for all $a,v\in\R^{\X}$, so $B'(Q,\cdot)$ is constant.

\smallskip
\emph{Step 5: conclusion.} Write $B'(Q,u)=-a(Q)$ for all $u$. By \eqref{eq:Adet}, $A'(Q,p)=a(Q)$ for all $p$. Therefore
\[
A(Q,p)=D(Q\,\|\,p)+a(Q),\qquad B(Q,u)=-\E_{Q}[u]-a(Q),
\]
\[
G(Q;p,u)=A(Q,p)+B(Q,u)=D(Q\,\|\,p)-\E_{Q}[u]=F(Q;p,u),
\]
and $\gap(Q;p,u)=D(Q\,\|\,\pi_{p,u})$ by \eqref{eq:residual}. Conversely, every pair $(A,B)$ of the displayed form satisfies (S) and reproduces $G=F$, so the gauge $a$ is exactly the residual freedom.
\end{proof}

\begin{remark}[The mechanism]\label{rem:mechanism}
The proof turns on a mismatch in scaling along exponential tilts, which are the $e$-geodesics through $Q$ in the language of \cite{AmariNagaoka}, although the proof nowhere uses this interpretation. Splitting a tilt into $k$ equal steps multiplies the number of increments by $k$ exactly, an arithmetic fact requiring nothing of $B'$, while the tilt cost per step, $K_{Q}(\pm v/k)$, is of order $k^{-2}$. Inequality \eqref{eq:increment} therefore pins every increment of $B'$ between sequences that vanish. This is why no regularity hypothesis is needed: the pathological additive functionals that ordinarily force one in characterization results of this kind \cite{Aczel} never enter, because the argument bounds $B'$ directly rather than first extracting a functional equation for it.
\end{remark}

\section{Sharpness of the hypotheses}\label{sec:sharpness}

Each of (S), (N), (T) is needed, and without (S) the remaining hypotheses are vacuous.

\begin{example}[(N) and (T) alone are vacuous]\label{ex:vacuous}
Let $\Phi:\simp\times\simp\times\R^{\X}\to\R$ be any function with $\Phi\ge 0$ and $\Phi(Q;p,u)=0$ whenever $Q=\pi_{p,u}$, and define $G(Q;p,u):=\Phi(Q;p,u)-\Psi_{p}(u)$. Then $G$ has gap $\Phi$ and satisfies (N) and (T). Hence no uniqueness can follow from (N) and (T) without further conditions, and uniqueness arises only from their interaction with separability.
\end{example}

\begin{lemma}\label{lem:nonsep}
There are no functions $\alpha:\simp\to\R$ and $\beta:\R^{\X}\to\R$ with $\Psi_{p}(u)=\alpha(p)+\beta(u)$ for all $(p,u)$.
\end{lemma}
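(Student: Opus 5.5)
Suppose, for contradiction, that $\Psi_{p}(u)=\alpha(p)+\beta(u)$ for all $(p,u)$. Additive separability of a two-argument function forces its mixed second difference to vanish. Concretely, for any $p,p'\in\simp$ and $u,u'\in\R^{\X}$,
\[
\Psi_{p}(u)-\Psi_{p'}(u)-\Psi_{p}(u')+\Psi_{p'}(u')=0 .
\]
Indeed, substituting the assumed form, the $\alpha$ terms cancel in pairs, and so do the $\beta$ terms. So it suffices to exhibit one quadruple $(p,p',u,u')$ for which this second difference is nonzero.

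I would take the simplest choice: $u'=0$ and $p'=p$ perturbed. With $u'=0$ we have $\Psi_{p}(0)=\Psi_{p'}(0)=0$, so the condition reduces to $\Psi_{p}(u)=\Psi_{p'}(u)$ for all $p,p'$ and all $u$. Equivalently, $\Psi_{p}(u)$ would have to be independent of $p$.

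To refute this, use $n\ge 2$ and pick two distinct points $x_{1},x_{2}\in\X$. Let $u$ be the indicator of $x_{1}$ times some $t\neq 0$, so that
\[
\Psi_{p}(u)=\log\bigl(1+p(x_{1})(e^{t}-1)\bigr).
\]
This is strictly increasing in $p(x_{1})$ whenever $t\ne 0$. Choosing $p,p'\in\simp$ with $p(x_{1})\ne p'(x_{1})$, for instance by moving mass between $x_{1}$ and $x_{2}$, gives $\Psi_{p}(u)\ne\Psi_{p'}(u)$, which is the desired contradiction.

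There is no real obstacle here; the only point needing care is the cancellation identity that reduces separability to the vanishing of a mixed difference. The normalization $\Psi_{p}(0)=0$ makes the contradiction immediate.
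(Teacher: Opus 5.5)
Your proof is correct and essentially the paper's: setting $u'=0$ and using $\Psi_{p}(0)=0$ shows $\Psi_{p}(u)$ would be independent of $p$ (the paper reaches the same point by noting that $u=0$ forces $\alpha$ to be constant), and the same test vector $u=t\,\delta_{x_{1}}$ gives the contradiction. One small slip: for $t<0$ the map $p(x_{1})\mapsto\log\bigl(1+p(x_{1})(e^{t}-1)\bigr)$ is strictly decreasing rather than increasing, but strict monotonicity is all the argument needs.
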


\begin{proof}
Suppose there were. Taking $u=0$ gives $\Psi_{p}(0)=\log\sum_{x}p(x)=0$, so $\alpha(p)=-\beta(0)$ for every $p$, and $\alpha$ is a constant. Then $\Psi_{p}(u)$ is independent of $p$ for every $u$. But since $n\ge 2$ we may fix $x_{1}\in\X$ and take $u=t\,\delta_{x_{1}}$ with $t\ne 0$, for which $\Psi_{p}(u)=\log\bigl(1+p(x_{1})(e^{t}-1)\bigr)$ is nonconstant in $p$, a contradiction.
\end{proof}

\begin{example}[(S) is necessary]\label{ex:S}
Take $\Phi(Q;p,u):=2\,D(Q\,\|\,\pi_{p,u})$ and define $G(Q;p,u):=\Phi(Q;p,u)-\Psi_{p}(u)$ as in Example~\ref{ex:vacuous}. Then (N) and (T) hold, since $2D(\cdot\,\|\,\cdot)$ is nonnegative and vanishes at the target; yet the gap is not $D(Q\,\|\,\pi_{p,u})$ and, by Proposition~\ref{prop:identity}, $G\ne F$. Hypothesis (S) must therefore fail, and it does. By Proposition~\ref{prop:identity},
\[
G(Q;p,u)=2\,D(Q\,\|\,p)-2\,\E_{Q}[u]+\Psi_{p}(u),
\]
so if $G(Q;p,u)=A(Q,p)+B(Q,u)$ held, then fixing any single $Q_{0}\in\simp$ and putting $\alpha(p):=A(Q_{0},p)-2\,D(Q_{0}\,\|\,p)$ and $\beta(u):=B(Q_{0},u)+2\,\E_{Q_{0}}[u]$ would give $\Psi_{p}(u)=\alpha(p)+\beta(u)$, contradicting Lemma~\ref{lem:nonsep}. More generally, for any scalar $\lambda\ne 1$, the functional associated with the gap $\lambda D(\cdot\,\|\,\cdot)$ is $G_{\lambda}=\lambda F+(\lambda-1)\Psi_{p}(u)$; if $G_{\lambda}$ were separable, then $\Psi_{p}(u)$ would be separable, contrary to Lemma~\ref{lem:nonsep}. For $\lambda\ge 0$, these gaps also satisfy (N) and (T).
\end{example}

\begin{example}[(T) is necessary]\label{ex:tight}
Fix $c>0$ and take $G:=F+c$, with $A(Q,p)=D(Q\,\|\,p)+c$ and $B(Q,u)=-\E_{Q}[u]$. Then $G$ is separable and its gap is $D(Q\,\|\,\pi_{p,u})+c\ge c>0$, so (N) holds; but the gap equals $c\ne 0$ at $Q=\pi_{p,u}$, so (T) fails, and $G\ne F$.
\end{example}

\begin{example}[(N) is necessary]\label{ex:nonneg}
Fix $m\in\R^{\X}$ with $\sum_{x}m(x)=0$ and $m\ne 0$, and set
\begin{align*}
G_{m}(Q;p,u)&:=F(Q;p,u)+\langle m,\log p\rangle+\langle m,u\rangle-\langle m,\log Q\rangle,
\end{align*}
which is separable, with $A(Q,p)=D(Q\,\|\,p)+\langle m,\log p\rangle-\langle m,\log Q\rangle$ and $B(Q,u)=-\E_{Q}[u]+\langle m,u\rangle$. Since $\sum_{x}m(x)=0$ we have $\langle m,\log p+u\rangle=\langle m,\log\pi_{p,u}\rangle$, so the gap of $G_{m}$ is
\begin{align*}
\gap_{m}(Q;p,u)&=D(Q\,\|\,\pi_{p,u})+\Bigl\langle m,\log\frac{\pi_{p,u}}{Q}\Bigr\rangle,
\end{align*}
which vanishes at $Q=\pi_{p,u}$, so (T) holds. But (N) fails. Since $m\ne 0$ and $\sum_{x}m(x)=0$, $m$ is not a multiple of $Q$; hence there is $g\in\R^{\X}$ with $\langle Q,g\rangle=0$ and $\langle m,g\rangle\ne 0$. For sufficiently small $|\varepsilon|$, the vector $\pi_{\varepsilon}:=Q(1+\varepsilon g)$ belongs to $\simp$. Taking the model $(p,u)=(\pi_{\varepsilon},0)$, whose target is $\pi_{\varepsilon}$, gives
\[
D(Q\,\|\,\pi_{\varepsilon})=-\E_{Q}[\log(1+\varepsilon g)]=\frac{\varepsilon^{2}}{2}\E_{Q}[g^{2}]+O(\varepsilon^{3})
\]
and
\[
\Bigl\langle m,\log\frac{\pi_{\varepsilon}}{Q}\Bigr\rangle
=\langle m,\log(1+\varepsilon g)\rangle
=\varepsilon\langle m,g\rangle+O(\varepsilon^{2}).
\]
Thus $\gap_{m}(Q;\pi_{\varepsilon},0)=\varepsilon\langle m,g\rangle+O(\varepsilon^{2})$, which is negative for sufficiently small $\varepsilon$ of the appropriate sign. Concretely, with $\X=\{1,2\}$ and $m=(1,-1)$ the gap is $D(Q\,\|\,\pi)+\log\frac{\pi_{1}Q_{2}}{\pi_{2}Q_{1}}$ with $\pi=\pi_{p,u}$. Thus $G_{m}\ne F$.
\end{example}

\section{Discussion}\label{sec:discussion}

\begin{remark}[Relation to convex duality]\label{rem:duality}
Identity \eqref{eq:identity} is closely related to the Gibbs variational principle
\[
\Psi_{p}(u)=\sup_{Q\in\simp}\Bigl\{\E_{Q}[u]-D(Q\,\|\,p)\Bigr\}
=-\inf_{Q\in\simp}F(Q;p,u),
\]
whose supremum is attained at $Q=\pi_{p,u}$. Equivalently, \eqref{eq:identity} states that
\[
F(Q;p,u)-F(\pi_{p,u};p,u)=D(Q\,\|\,\pi_{p,u}).
\]
This is the Bregman divergence generated by the negative entropy in the trial variable. The theorem is not an immediate consequence of Fenchel duality. Duality identifies a conjugate once one member of the pair is prescribed, whereas the present result characterizes which separated functionals can participate in the exact decomposition at all. Nor does the Bregman reading by itself yield uniqueness: every strictly convex generator has a strict Bregman divergence. The present theorem instead imposes the decomposition-level conditions (S), (N) and (T), without prescribing a convex generator or a conjugate pair. Example~\ref{ex:S} makes the distinction concrete: its $G$ is convex in $Q$ and its gap is the strict divergence $2D(Q\,\|\,\pi_{p,u})$, yet separability fails.
\end{remark}

\begin{remark}[Relation to the canonical divergence]\label{rem:canonical}
As recalled in Remark~\ref{rem:geometry}, the simplex carries the dually flat structure of Amari and Nagaoka, and the Kullback--Leibler divergence is its canonical divergence \cite{AmariNagaoka,Amari2016,AyJost}; Ay and Amari \cite{AyAmari2015} obtain it from geodesic data by a construction available on general dualistic manifolds. These results single out $D(\cdot\,\|\,\cdot)$ through geometric input: a metric, a pair of affine connections, and the compatibility between them. Theorem~\ref{thm:main} singles out the same divergence by a route in which no geometric structure appears: the hypotheses concern only the algebra of the decomposition (separability, nonnegativity, tightness at the target), and neither convexity nor any metric enters; even the fact that the residual factors through the target is derived rather than imposed. That an operational requirement on the decomposition of the log-partition function and a geometric requirement on the dualistic structure of the simplex select the same object is a consistency that neither result implies of the other. The theorem may accordingly be read as an operational counterpart, within the setting of identity \eqref{eq:identity}, of the canonical status the Kullback--Leibler divergence enjoys in the dually flat geometry of the simplex, and Corollary~\ref{cor:exclusion} as the corresponding exclusion: divergences that are geometrically natural on other grounds, the $\alpha$-divergences among them \cite{Amari2009,Amari2016}, cannot inherit the exact decomposition.
\end{remark}

\begin{remark}[The gauge freedom]\label{rem:gauge}
Theorem~\ref{thm:main} determines $G$ and the gap but not $A$ and $B$ individually: the function $a(Q)$ is free and cancels. Only the sum enters the decomposition, so no condition stated in terms of $G$ can fix the split. The natural normalization $A(Q,p)=D(Q\,\|\,p)$, i.e.\ $a\equiv 0$, is the one under which the entropic term vanishes at $Q=p$.
\end{remark}

\begin{remark}[Beyond finite $\X$]\label{rem:beyond}
An extension to general measure spaces requires specifying an admissible class of densities and log-weights that is closed under the bounded exponential tilts used in the proof. On such a class, the constructions $p_{u}=Qe^{-u}/\E_{Q}[e^{-u}]$ and $a+\tfrac{j}{k}v$, together with the elementary bound \eqref{eq:Kbound}, suggest a direct extension for bounded log-density ratios and log-weights. We leave the precise functional-analytic formulation to future work.
\end{remark}

\begin{remark}[Interpretation in the two readings]\label{rem:readings}
In the mean-field reading, Theorem~\ref{thm:main} states that, under (S), (N) and (T), the Gibbs--Bogoliubov functional is the unique entropy-plus-energy functional that bounds $-\log Z$ from above with equality at the Gibbs distribution. In the inferential reading, under the same assumptions, the evidence lower bound is the unique complexity-plus-accuracy lower bound on the log-evidence that is tight at the posterior. In particular, under the hypotheses of Theorem~\ref{thm:main} the Kullback--Leibler divergence occurring in \eqref{eq:identity} is determined rather than selected.
\end{remark}


\begin{thebibliography}{99}

\bibitem{Aczel}
Acz\'el, J.: Lectures on Functional Equations and Their Applications. Academic Press, New York (1966)

\bibitem{AliSilvey}
Ali, S.M., Silvey, S.D.: A general class of coefficients of divergence of one distribution from another. J. R. Stat. Soc. Ser. B \textbf{28}, 131--142 (1966). \url{https://doi.org/10.1111/j.2517-6161.1966.tb00626.x}

\bibitem{Amari2009}
Amari, S.: $\alpha$-divergence is unique, belonging to both $f$-divergence and Bregman divergence classes. IEEE Trans. Inf. Theory \textbf{55}(11), 4925--4931 (2009). \url{https://doi.org/10.1109/TIT.2009.2030485}

\bibitem{Amari2016}
Amari, S.: Information Geometry and Its Applications. Applied Mathematical Sciences, vol. 194. Springer, Tokyo (2016). \url{https://doi.org/10.1007/978-4-431-55978-8}

\bibitem{AmariNagaoka}
Amari, S., Nagaoka, H.: Methods of Information Geometry. Translations of Mathematical Monographs, vol. 191. American Mathematical Society, Providence, and Oxford University Press, Oxford (2000)

\bibitem{AyAmari2015}
Ay, N., Amari, S.: A novel approach to canonical divergences within information geometry. Entropy \textbf{17}(12), 8111--8129 (2015). \url{https://doi.org/10.3390/e17127866}

\bibitem{AyJost}
Ay, N., Jost, J., L\^e, H.V., Schwachh\"ofer, L.: Information Geometry. Ergebnisse der Mathematik und ihrer Grenzgebiete. 3. Folge, vol. 64. Springer, Cham (2017). \url{https://doi.org/10.1007/978-3-319-56478-4}

\bibitem{AyVanOostrumDatar}
Ay, N., van Oostrum, J., Datar, A.: On the natural gradient of the evidence lower bound. J. Mach. Learn. Res. \textbf{26}(222), 1--37 (2025). \url{https://jmlr.org/papers/v26/24-0606.html}

\bibitem{Blei}
Blei, D.M., Kucukelbir, A., McAuliffe, J.D.: Variational inference: a review for statisticians. J. Am. Stat. Assoc. \textbf{112}(518), 859--877 (2017). \url{https://doi.org/10.1080/01621459.2017.1285773}

\bibitem{Bregman}
Bregman, L.M.: The relaxation method of finding the common point of convex sets and its application to the solution of problems in convex programming. USSR Comput. Math. Math. Phys. \textbf{7}, 200--217 (1967). \url{https://doi.org/10.1016/0041-5553(67)90040-7}

\bibitem{CoverThomas}
Cover, T.M., Thomas, J.A.: Elements of Information Theory, 2nd edn. Wiley, Hoboken (2006)

\bibitem{Csiszar1967}
Csisz\'ar, I.: Information-type measures of difference of probability distributions and indirect observations. Studia Sci. Math. Hungar. \textbf{2}, 299--318 (1967)

\bibitem{Csiszar1991}
Csisz\'ar, I.: Why least squares and maximum entropy? An axiomatic approach to inference for linear inverse problems. Ann. Stat. \textbf{19}(4), 2032--2066 (1991). \url{https://doi.org/10.1214/aos/1176348385}

\bibitem{DawidLauritzenParry}
Dawid, A.P., Lauritzen, S., Parry, M.: Proper local scoring rules on discrete sample spaces. Ann. Stat. \textbf{40}(1), 593--608 (2012). \url{https://doi.org/10.1214/12-AOS972}

\bibitem{Jordan}
Jordan, M.I., Ghahramani, Z., Jaakkola, T.S., Saul, L.K.: An introduction to variational methods for graphical models. Mach. Learn. \textbf{37}, 183--233 (1999). \url{https://doi.org/10.1023/A:1007665907178}

\bibitem{LiTurner}
Li, Y., Turner, R.E.: R\'enyi divergence variational inference. In: Advances in Neural Information Processing Systems 29, pp. 1073--1081 (2016)

\bibitem{Minka}
Minka, T.: Divergence measures and message passing. Technical Report MSR-TR-2005-173, Microsoft Research (2005)

\bibitem{Nielsen2020}
Nielsen, F.: An elementary introduction to information geometry. Entropy \textbf{22}(10), 1100 (2020). \url{https://doi.org/10.3390/e22101100}

\bibitem{OpperSaad}
Opper, M., Saad, D. (eds.): Advanced Mean Field Methods: Theory and Practice. MIT Press, Cambridge (2001)

\bibitem{ShoreJohnson}
Shore, J.E., Johnson, R.W.: Axiomatic derivation of the principle of maximum entropy and the principle of minimum cross-entropy. IEEE Trans. Inf. Theory \textbf{26}(1), 26--37 (1980). \url{https://doi.org/10.1109/TIT.1980.1056144}

\bibitem{Uffink}
Uffink, J.: Can the maximum entropy principle be explained as a consistency requirement? Stud. Hist. Philos. Mod. Phys. \textbf{26}(3), 223--261 (1995). \url{https://doi.org/10.1016/1355-2198(95)00015-1}

\bibitem{WainwrightJordan}
Wainwright, M.J., Jordan, M.I.: Graphical models, exponential families, and variational inference. Found. Trends Mach. Learn. \textbf{1}(1--2), 1--305 (2008). \url{https://doi.org/10.1561/2200000001}

\end{thebibliography}
\end{document}